\newcommand{\mse}{\mathsf{mse}}
\newcommand{\MMSE}{\mathsf{mmse}}
\begin{document}
%
\title{Signal recovery using expectation consistent approximation for linear observations}

\author{\IEEEauthorblockN{Yoshiyuki Kabashima }
\IEEEauthorblockA{Dept. of Comput. Intell. \& Syst. Sci.\\
Tokyo Institute of Technology\\
Yokohama 226--8502, Japan\\
Email: kaba@dis.titech.ac.jp}
\and
\IEEEauthorblockN{Mikko Vehkaper\"a}
\IEEEauthorblockA{
Department of Signal Processing and Acoustics\\
Aalto University\\
Otakaari 5 A FI-02150 Espoo, Finland \\
Email: mikko.vehkapera@aalto.fi}}


%


\newcommand{\bF}{{\textbf {F}}}
\newcommand{\bG}{{\textbf {G}}}
\newcommand{\bW}{{\textbf {W}}}
\newcommand{\bh}{{\textbf {h}}}
\renewcommand{\bm}{{\textbf {m}}}
\newcommand{\bx}{{\textbf {x}}}
\newcommand{\bo}{{\textbf {o}}}
\newcommand{\bs}{{\textbf {s}}}
\newcommand{\bv}{{\textbf {v}}}
\newcommand{\by}{{\textbf {y}}}
\newcommand{\bn}{{\textbf {n}}}
\newcommand{\be}{\begin{equation}}
\newcommand{\ee}{\end{equation}}
\newcommand{\bea}{\begin{eqnarray}}
\newcommand{\eea}{\end{eqnarray}}
\newcommand{\mc}{\mathcal}
\newcommand{\eps}{\varepsilon}
\newcommand{\s}{\sigma}
\renewcommand{\d}{\text {d}}
\newcommand{\e}{\text {e}}
\newcommand{\ds}{\Delta\sigma}
\newcommand{\hw}{h^{\rm W}}
\newcommand{\sw}{\sigma^{\rm W}}
\newcommand{\de}{\partial}
\newcommand{\eq}{\text{ eq}}
\newcommand{\sign}{\text{ sign}}
\newcommand{\ua}{\uparrow}
\newcommand{\da}{\downarrow}
\newcommand{\erfc}{{\rm erfc}}
\newcommand{\figwidth}{7.2cm}

\newtheorem{theorem}{Theorem}

\maketitle

\begin{abstract}
A signal recovery scheme is
developed for linear observation systems based on
expectation consistent (EC) mean field approximation.
Approximate message passing (AMP) is known to be
consistent with the results obtained using the replica theory,
which is supposed to be exact in the large system limit,
when each entry of the observation matrix is independently generated
from an identical distribution.
However, this is not necessarily the case for general matrices.
We show that EC recovery exhibits consistency
with the replica theory for a wider class of random observation matrices.
This is numerically confirmed by experiments for the Bayesian optimal
signal recovery of compressed sensing using random row-orthogonal matrices.
\end{abstract}


%
\IEEEpeerreviewmaketitle

\section{Introduction}
Let us suppose that an original $N$-dimensional vector
$\bx =(x_i)\in \mathbb{R}^N$ is transformed into an $M$-dimensional vector $\by =(y_\mu) \in \mathbb{R}^M$ by
a 
matrix $A=(A_{\mu i}) \in \mathbb{R}^{M \times N}$ as
\begin{eqnarray}
\by=A\bx +\bn,
\label{observation}
\end{eqnarray}
where $\bn=(n_\mu)\in \mathbb{R}^M$ is
provided randomly.
Many problems of signal processing are formulated using this form.
Equation (\ref{observation}) describes the basic signal sampling scheme
if we identify $\by$, $\bx$, and $A$ as sampled signal values, Fourier coefficients,
and a Fourier matrix, respectively. In code division multiple access (CDMA) systems, $\bx$ and $A$ correspond to
transmitted signals of $N$ users and a set of signature sequences, whereas
$\by$ is the signal observed at a base station.
In multi-input and multi-output communication (MIMO) systems,
$\bx$, $\by$, and $A$ represent the
transmitted and received signals by $N$ and $M$ antennas, and
the signal transmission efficiency between the input and output antennas. 
In compressive sensing (CS), $\bx$, $\by$, and $A$ represent a sparse signal, its measurement,
and a measurement matrix.

For simplicity, we hereafter assume that $A$ is known exactly.
Then, a major problem is to design a computationally efficient scheme for recovering
$\bx$ from $\by$ accurately.
A standard approach for this is to follow the least-square principle;
minimizing $||\by-A\bx||^2$ in conjunction with appropriate $l_2$-regularization terms with respect to $\bx$ yields a signal recovery scheme that 
performs with a low computational cost using operations of linear algebra.
Unfortunately, the optimality of inference accuracy is not guaranteed for
the resulting scheme unless $\bx$ follows a distribution of a specific class.
In recent years, significant attention has been paid to the usage of the $l_1$-norm regularization
when $\bx$ is supposed to be a sparse signal.  
The $l_1$-based recovery is capable of recovering sparse signals
with a computational cost of the polynomial order of $N$.
However, this still does not achieve the optimal accuracy in general
\cite{RanganFletcherGoyal}, whereas perfect recovery is possible
for the noiseless case if the observation ratio $\alpha=M/N$ is sufficiently large
\cite{DonohoTanner, KabashimaWadayamaTanaka, DonohoMontanari}.

When the prior distribution of $\bx$ and the distribution of the observation noise
$\bn$ are known, the Bayesian framework offers an optimal recovery scheme 
{in minimum mean square error (MMSE) sense}
although its exact execution is computationally difficult in most cases.
The purpose of this paper is to develop a computationally feasible
approximate scheme for the Bayesian signal recovery
for a class of random observation matrix $A$.
For this, we employ an advanced mean field method
known as {\em expectation consistent} (EC) approximation developed in
statistical mechanics \cite{OpperWinther2001,OpperWinther2006} and machine learning \cite{Minka}.
The developed scheme exhibits consistency
with the replica theory, which is supposed to provide exact predictions in the large system limit.

\section{Related work}

{%
Reference \cite{Guo-Baron-Shamai-2009} used the
replica method to find a decoupled formulation for the
input-output statistics of a CS system whose measurement
matrix is composed of independently and identically distributed (i.i.d.) entries.  As a corollary,
this leads to a computationally feasible characterization
of the MMSE as well.  The MMSE of a similar i.i.d. setup
was later evaluated directly in \cite{Huleihel-Merhav-arxiv2013}
by using mathematically rigorous methods.  Numerical
results therein verified the accuracy of the earlier
replica analysis. Finally, non-i.i.d. sensing matrices
where considered in \cite{Tulino-etal-trit2013}, where the
replica method was used to find the support recovery
performance of a class of CS systems.
}

To the best of our knowledge, computationally feasible algorithms approximately
performing the Bayesian recovery were initially developed for
a simple perceptron (linear classifier) \cite{Opper1996}
and later for CDMA \cite{Kabashima,TanakaOkada}.
Recently, a similar idea was applied for CS
\cite{DonohoMontanari, Schneiter, Krzakala} as {\em approximate message passing} (AMP), and was summarized
as a general formulation termed 
{\em generalized approximate message passing}
(GAMP) \cite{Rangan}.
However, these studies rely on the assumption that each entry of
$A$, $A_{ij}$, is 
i.i.d., and
the appropriateness for the employment to other ensembles is not guaranteed.
In fact, the necessity for considering a certain characteristic
feature of $A$ in constructing the approximation was pointed out
in \cite{OpperWinther2001}, and its significance was tested for
the simple perceptron \cite{ShinzatoKabashimaB},
CDMA \cite{TakedaUdaKabashima}, and MIMO \cite{HatabuTakedaKabashima}.
Here, we show how this approach is employed for the signal recovery of 
linear observations and examine its significance for an example of CS.

\section{Problem setup}
\subsection{Model specification}
In the following, we suppose that each entry of $\bx$, $x_i$ $(i=1,2,\ldots,N)$,
is generated from a distribution $P(x)$ independently of one another.
For simplicity, we focus on the case where the observation noise $\bn$ obeys a memoryless
zero mean Gaussian distribution, so that the conditional distribution of $\by$ given
$\bx$ is provided as
\begin{eqnarray}
P(\by|\bx, A)=\frac{1}{(2\pi \sigma^2)^{M/2}}
\exp\left ({-\frac{||\by-A\bx||^2}{2\sigma^2}} \right ).
\label{Gaussian}
\end{eqnarray}
General treatment that includes the case of non-Gaussian noise can be found in \cite{Kabashima2008}.
Further, we assume that for eigenvalue decomposition
$A^{\rm T} A= O D O^{\rm T}$, where $O$ is the right eigenbasis of $A$
and $D =(d_i \delta_{ij})$ is the diagonal matrix composed of
eigenvalues $d_i$ of $A^{\rm T}A$,
$O$ can be regarded as a random sample from the uniform distribution of $N \times N$ orthogonal matrices
and $\rho_{A^{\rm T} A}(\lambda)=N^{-1}\sum_{i=1}^N \delta(\lambda-d_i)$
asymptotically converges to a certain distribution $\rho(\lambda)$ with a probability of unity as $N \to \infty$.
This assumption holds when $A_{ij}$'s are generated independently of one another
from a zero mean Gaussian distribution. Further, 
this is also the case when $A$ is constructed by
randomly selecting $M$ rows from a randomly generated $N \times N$ orthogonal matrix.

\subsection{Bayesian recovery and expected performance}
Let $\hat{\bx}(\by)$ be an arbitrary recovery scheme  given $\by$. 
Under the above assumption, the mean square error 
${\mse}=N^{-1} \int d\bx d\by P(\bx,\by|A)||\bx -\hat{\bx}(\by) ||^2$
is minimized by the Bayesian recovery
\begin{eqnarray}
\hat{\bx}^{\rm Bayes}(\by) \equiv
\int d \bx \bx P(\bx|\by,A), 
\label{Bayes_recovery}
\end{eqnarray}
which achieves the minimum value of $\mse$ (MMSE)  as
\begin{eqnarray}
{\MMSE}=N^{-1}
\left ( \left \langle \left |\left |\bx \right |\right |^2 \right \rangle -
 \int d\by P(\by|A) \left |\left| \left \langle \bx \right \rangle_{|\by} \right |\right|^2 \right ), 
\label{bayestheorem}
\end{eqnarray} 
where $P(\bx,\by|A)=P(\by|\bx,A) \prod_{i=1}^N P(x_i)$, 
$P(\by|A)=\int d \bx P(\bx,\by|A)$. 
$\left \langle \cdots \right \rangle $ and 
$\left \langle \cdots \right \rangle_{|\by}$ denote
averages with respect to the prior and posterior 
distributions
$P(\bx)=\prod_{i=1}^NP(x_i)$ and 
$P(\bx|\by,A)=P(\bx,\by|A)/P(\by|A)$, 
respectively. 

%
  
Although optimality of (\ref{Bayes_recovery}) is guaranteed,  
evaluating {the MMSE} is generally difficult. The replica method from statistical 
mechanics enables the evaluation for the large system limit 
$N,M \to \infty$ keeping $\alpha =N/M \sim O(1)$
although its mathematical validity is still open. 
For generality, let us suppose that the true prior and the variance of Gaussian noise, 
$P_0\!(\!x\!)$ and $\sigma_0^2$, may be different from 
$P(\!x\!)$ and $\sigma^2$, respectively. 
The replica symmetric (RS) computation \cite{Beyond,Nishimori} 
evaluates the performance of the Bayesian recovery as follows. 
\begin{theorem}
The RS evaluation offers the typical value of ${\mse}$ as 
\begin{eqnarray}
\label{theorem2}
{\mse} =q-2m+Q_0. 
\end{eqnarray}
Here, $Q_0=\int dx x^2 P_0(x)$, and $q$ and $m$ are determined by extremizing 
the variational free energy density
\begin{eqnarray}
&&\phi(q,m,Q,\hat{q},\hat{m},\hat{Q})=-\frac{\hat{Q}Q}{2}\!-\!\frac{\hat{q}q}{2}\!+\!\hat{m}m 
\!- \!G\left ( \!-\frac{Q\!-\!q}{\sigma^2}\! \right )\cr
&&+\left (\frac{q-2m+Q_0}{\sigma^2}-\frac{\sigma_0^2(Q-q)}{\sigma^4} \right )G^\prime \left (-\frac{Q-q}{\sigma^2} \right ) \cr
&& \! - \! \int \! \! dx^0 \! P_0 \!( x^0 \!)  {\rm D}z \!
\ln \! \left [\!  \int \! dx P(x)\e^{- \!\frac{\hat{Q}\!+\!\hat{q}}{2}\!x^2\!+(\!\sqrt{\hat{q}}z\!+\!\hat{m}x^0\!) x} \!\right ], 
\label{replica_free_entropy}
\end{eqnarray}
where 
${\rm D}z=\frac{dz}{\sqrt{2\pi}}\e^{-\frac{z^2}{2}}$ stands for the Gaussian measure. 
$G(x)=\mathop{\rm extr}_{\Lambda} \left \{
-\frac{1}{2}\int d\lambda \rho(\lambda) \ln \left |\Lambda -\lambda \right | +\frac{1}{2}\Lambda x \right \}
-\frac{1}{2} \ln |x| -\frac{1}{2}$,
{where $\mathop{\rm extr}_{X}\left \{ \cdots \right \}$ denotes the extremization with respect to $X$,}
means the asymptotic form of the single rank Harish-Chandra-Itykson-Zuber integral of $A^{\rm T} A$ \cite{Tanaka2007}, 
which is linked to the ${R}$-transform \cite{MullerGuoMoustakas} as ${R}_{A^{\rm T}A}(x)=G^\prime(x)$. 
\end{theorem}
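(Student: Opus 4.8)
The plan is to evaluate the typical free energy $f=-N^{-1}\langle\ln Z\rangle$ by the replica method, where $Z=\int\d\bx\,P(\bx)P(\by|\bx,A)$ is the partition function of the assumed model and the average runs over the true data, $\bx^0\sim P_0$, $\bn\sim\mc{N}(0,\sigma_0^2\mathbf{I})$, $\by=A\bx^0+\bn$, together with $A$. Writing $\langle\ln Z\rangle=\lim_{n\to0}n^{-1}\ln\langle Z^n\rangle$, I would first introduce $n$ replicas $\bx^1,\dots,\bx^n$ drawn from $P$ and treat the true signal $\bx^0$ as an extra replica drawn from $P_0$. Integrating out the observation $\by$ (a Gaussian integral combining the true likelihood $\propto\e^{-\|\by-A\bx^0\|^2/2\sigma_0^2}$ with the $n$ assumed ones $\propto\e^{-\|\by-A\bx^a\|^2/2\sigma^2}$) couples all $n+1$ replicas only through the Gram matrix $A^{\rm T}A$, so that the exponent reduces to a quadratic form $\sum_{a,b}\theta_{ab}\,(\bx^a)^{\rm T}A^{\rm T}A\,\bx^b$ whose coefficients $\theta_{ab}$ are fixed by $\sigma^2$ and $\sigma_0^2$.

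The main obstacle, and the step that departs from the i.i.d.\ AMP derivations, is the average over $A$. Using $A^{\rm T}A=ODO^{\rm T}$ I would separate the average over the frozen spectrum $\{d_i\}$, whose empirical law converges to $\rho(\lambda)$, from the average over the Haar-distributed eigenbasis $O$. In the eigenbasis the quadratic form becomes $\sum_i d_i\sum_{a,b}\theta_{ab}v_i^a v_i^b$ with $\bv^a=O^{\rm T}\bx^a$, and the $O$-average is a single-rank Harish--Chandra--Itzykson--Zuber (HCIZ) integral. Its large-$N$ asymptotics is controlled precisely by the function $G$ whose derivative is the $R$-transform $R_{A^{\rm T}A}=G^\prime$, exactly the ingredient announced in the statement. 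To factorize the sites I would insert the overlap order parameters $Q_{ab}=N^{-1}(\bx^a)^{\rm T}\bx^b$ with conjugates $\hat{Q}_{ab}$ through Fourier--delta representations, whereupon the $O$-average supplies the terms built from $G$ and $G^\prime$ in $\phi$.

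I would then impose the replica-symmetric ansatz, which in this Bayesian (possibly mismatched) setting reads $Q_{aa}=Q$, $Q_{ab}=q$ for $1\le a<b$, $Q_{0a}=m$, with the analogous $\hat{Q},\hat{q},\hat{m}$ for the conjugates. Taking $n\to0$ decouples the $N$ sites: the replica off-diagonal coupling (proportional to $\hat{q}$) is linearized by a Hubbard--Stratonovich field $z$, so the single-site partition function reduces to $\int\d x\,P(x)\,\e^{-\frac{\hat{Q}+\hat{q}}{2}x^2+(\sqrt{\hat{q}}\,z+\hat{m}x^0)x}$, and averaging its logarithm over $z\sim\Dx z$ and over $x^0\sim P_0$ produces the last line of $\phi$. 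Collecting the conjugate-parameter terms $-\hat{Q}Q/2-\hat{q}q/2+\hat{m}m$, the spectral contributions involving $G(-\tfrac{Q-q}{\sigma^2})$ and $G^\prime(-\tfrac{Q-q}{\sigma^2})$, and the single-site term yields exactly the density (\ref{replica_free_entropy}); its stationarity conditions $\partial\phi=0$ fix $(q,m,Q,\hat{q},\hat{m},\hat{Q})$.

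Finally I would read off the error from the saddle point. By construction $q=N^{-1}\langle\|\langle\bx\rangle_{|\by}\|^2\rangle$ and $m=N^{-1}\langle\langle\bx\rangle_{|\by}\cdot\bx^0\rangle$, while $Q_0=\int\d x\,x^2P_0(x)=N^{-1}\langle\|\bx^0\|^2\rangle$, so that $N^{-1}\langle\|\bx^0-\langle\bx\rangle_{|\by}\|^2\rangle=q-2m+Q_0$, which is (\ref{theorem2}). I expect the genuinely hard part to be the asymptotic evaluation of the single-rank HCIZ integral and the verification that its saddle point, combined with the order-parameter constraints, reproduces exactly the $G$- and $G^\prime$-dependent terms together with the energetic combination $q-2m+Q_0$ multiplying $G^\prime$; everything after the RS ansatz is then standard large-deviation bookkeeping.
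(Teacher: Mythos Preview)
Your proposal is correct and follows essentially the same route as the paper: the paper's own proof is only a pointer to \cite{TakedaUdaKabashima,VehkaperaKabashima}, and the derivation you sketch---replicating the partition function, integrating out $\by$, averaging the Haar eigenbasis $O$ via the single-rank HCIZ integral to produce the $G$/$G^\prime$ terms, decoupling sites with overlap order parameters and their conjugates, imposing the RS ansatz, linearizing the replica coupling by a Hubbard--Stratonovich field, and reading off $\mse=q-2m+Q_0$ from the saddle---is precisely the technique those references employ. There is nothing to add; your outline is a faithful expansion of what the paper leaves implicit.
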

\begin{proof}
Use techniques employed in \cite{TakedaUdaKabashima,VehkaperaKabashima}\footnote{In \cite{VehkaperaKabashima}, the free energy is expressed using the Stieltjes transform. The two expressions are, however, mathematically equivalent, and always transformable to each other}.
\end{proof}

When the correct prior and variance, $P(x)=P_0(x)$ and $\sigma^2=\sigma_0^2$, are
used, the replica symmetry ensures that the dominant solution extremizing
(\ref{replica_free_entropy}) satisfies $Q=Q_0$, $q=m$, $\hat{Q}=0$, and $\hat{q}=\hat{m}$,
which yields ${\MMSE}=Q_0-q$. 
It is strongly conjectured that solutions of this type are
always thermodynamically dominant
offering exact (but not rigorous) predictions in the large system limit \cite{NishimoriSherrington,Krzakala}.
Therefore, our goal is to develop a computationally feasible scheme
that approximately evaluates (\ref{Bayes_recovery}) and becomes
consistent with the results predicted by (\ref{replica_free_entropy})
as the system size tends to infinity.

\section{Expectation consistent signal recovery}
\subsection{Gibbs free energy formalism}
The following theorem 
constitutes the basis of our approximation.
\begin{theorem}
\label{theorem3}
Let us define {\em Gibbs free energy} as
\begin{eqnarray}
\Phi(\bm)=\mathop{\rm extr}_{\bh}\left \{
\bh \cdot \bm -\ln \left [
\int d \bx P(\bx|\by,A)\e^{\bh \cdot \bx}
\right ]
\right \}.
\label{Gibbs}
\end{eqnarray}
{The global minimizer of $\Phi(\bm)$ is
$\bm=\left \langle \bx \right \rangle_{|\by}$}.
\end{theorem}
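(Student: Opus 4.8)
The plan is to recognize $\Phi(\bm)$ as the Legendre–Fenchel conjugate of the cumulant generating function of the posterior and to exploit its convexity. First I would define $W(\bh)=\ln\left[\int d\bx\, P(\bx|\by,A)\e^{\bh\cdot\bx}\right]$, the log-partition function of the exponentially tilted posterior. Differentiating gives $\nabla_{\bh}W(\bh)=\langle\bx\rangle_{\bh}$, the mean under the tilted measure $P(\bx|\by,A)\e^{\bh\cdot\bx}/\e^{W(\bh)}$, and the Hessian $\nabla^2_{\bh}W(\bh)$ equals the covariance matrix of $\bx$ under that same measure, which is positive semidefinite. Hence $W$ is convex, the bracketed functional $\bh\cdot\bm-W(\bh)$ in (\ref{Gibbs}) is concave in $\bh$, the extremization is in fact a maximization, and $\Phi=W^{*}$ is the convex conjugate of $W$, so $\Phi$ is itself convex in $\bm$.

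Next I would extract the stationarity condition of the inner problem. Setting $\partial_{\bh}[\bh\cdot\bm-W(\bh)]=0$ yields $\bm=\nabla_{\bh}W(\bh^{\ast})=\langle\bx\rangle_{\bh^{\ast}}$, which determines the optimal dual variable $\bh^{\ast}=\bh^{\ast}(\bm)$ implicitly as the tilt that reproduces $\bm$ as the tilted mean. Then, by the standard envelope (Danskin) property of the conjugate, the gradient of $\Phi$ is the maximizing argument: $\nabla_{\bm}\Phi(\bm)=\bh^{\ast}(\bm)$. The implicit $\bm$-dependence of $\bh^{\ast}$ drops out precisely because $\bm-\nabla_{\bh}W(\bh^{\ast})=0$ holds at the extremum, so the chain-rule contribution vanishes.

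Finally I would locate the global minimizer. Since $\Phi$ is convex, any stationary point $\nabla_{\bm}\Phi(\bm)=0$ is automatically a global minimizer, and this condition forces $\bh^{\ast}(\bm)=0$. Substituting $\bh^{\ast}=0$ back into the stationarity relation, the tilt disappears and $\bm=\langle\bx\rangle_{0}=\int d\bx\,\bx\,P(\bx|\by,A)=\langle\bx\rangle_{|\by}$, the posterior mean. This is the claimed conclusion.

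The main obstacle I anticipate lies not in the computation but in the regularity underpinning the conjugacy machinery: one must verify that $W$ is finite and smooth on a neighborhood of $\bh=0$, so that $\nabla_{\bh}W$ is well defined there and the extremum in (\ref{Gibbs}) is attained at a genuine interior maximum. For the Gaussian-likelihood posterior (\ref{Gaussian}) this is immediate whenever the prior $P(x)$ has finite second moments, but if the posterior covariance is singular the conjugate may fail to be differentiable, in which case the statement is cleanest read as a claim about the unique stationary point. I would therefore record the mild conditions (finite posterior mean and covariance near $\bh=0$) under which the argument above goes through verbatim.
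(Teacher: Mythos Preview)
Your proposal is correct and follows essentially the same route as the paper's own proof: both identify the inner stationarity condition $\bm=\langle\bx\rangle_{\bh^{\ast}}$, use the envelope argument to obtain $\nabla_{\bm}\Phi=\bh^{\ast}$, and establish convexity of $\Phi$ through the positive-definite tilted covariance (you via the Legendre--Fenchel conjugate of $W$, the paper by computing the Hessian of $\Phi$ directly as the inverse of that covariance matrix). The paper's proof is slightly more hands-on in notation and omits the regularity caveats you append at the end, but the logical skeleton is identical.
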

\begin{proof}
The extremization of (\ref{Gibbs}) offers
\begin{eqnarray}
m_i=\frac{\int d\bx x_i P(\bx|\by,A)\e^{\bh \cdot \bx}}
{\int d \bx P(\bx|\by,A)\e^{\bh \cdot \bx}}.
\label{firstderivative}
\end{eqnarray}
This means that for a given value of
$\bm$, $\bh$ is determined so that the average of $\bx$
for a modified distribution $P(\bx|\by,A,\bh)=P(\bx|\by,A)\e^{\bh \cdot \bx}/
\int d \bx P(\bx|\by,A)\e^{\bh \cdot \bx}$ coincides with $\bm$.
In particular, $\bh=\textbf{0}$ offers $\bm=\left \langle \bx \right \rangle_{|\by}$
and corresponds to an extremum point of $\Phi(\bm)$ since
$\partial \Phi(\bm)/\partial m_i=h_i=0$ holds. Furthermore, $\bm=\left \langle \bx \right \rangle_{|\by}$ is
characterized as the globally minimum point, which is shown as follows.
For any value of $\bm$, the Hessian of $\Phi(\bm)$ is evaluated as
$\left (\frac{\partial^2 \Phi(\bm)}{\partial m_i \partial m_j} \right )=
\left (\frac{\partial h_j}{\partial m_i} \right )=\left (\frac{\partial m_j}{\partial h_i} \right )^{-1}$.
However, (\ref{firstderivative}) indicates that $\frac{\partial m_j}{\partial h_i}$ coincides with
the covariance of $x_i$ and $x_j$ evaluated by $P(\bx|\by,A,\bh)$.
Therefore, both matrices $\left (\frac{\partial m_j}{\partial h_i} \right )$ and
$\left (\frac{\partial^2 \Phi(\bm)}{\partial m_i \partial m_j} \right )=\left (\frac{\partial m_j}{\partial h_i} \right )^{-1}$
are positive definite. This means that $\Phi(\bm)$ is a convex downward function and has a unique minimum point.
\end{proof}

\subsection{Expectation consistent approximation}
Theorem \ref{theorem3} indicates that Bayesian recovery
can be performed using the techniques of convex optimization if $\Phi(\bm)$ is correctly evaluated.
Unfortunately, this is also practically unfeasible in most cases as
the assessment of $\Phi(\bm)$ is computationally difficult in general.
One could exceptionally evaluate $\phi(\bm)$ with a low computational
cost if $P(\bx|\by,A)$ were a factorized distribution as $P(\bx)=\prod_{i=1}^N P(x_i)$.
Reference \cite{Plefka} developed an approximation scheme based on
Taylor's expansion around the factorized distribution
by introducing an expansion parameter $\beta $ in the interaction terms
that result in computational difficulty.
In the current case, this implies that the evaluation of
$\Phi(\bm)=\tilde{\Phi}(\bm;\beta=1)$ is performed such that
$\Phi(\bm) = \tilde{\Phi}(\bm;0)+\frac{\partial }{\partial \beta}\tilde{\Phi}(\bm;0)
+\frac{\partial^2 }{2!  \partial \beta^2}\tilde{\Phi}(\bm;0)+\ldots$
by introducing generalized Gibbs
free energy
\begin{eqnarray}
&&\tilde{\Phi}(\bm; \beta )
=const +\mathop{\rm extr}_{\bh}\left \{
\bh \cdot \bm \right . \cr
&& \hspace*{.5cm} \left .-\ln \left [\int d \bx \e^{-\frac{\beta }{2\sigma^2}
\left |\left |\by-A \bx \right |\right |^2} \prod_{i=1}^N \left (P(x_i)\e^{h_ix_i} \right )
\right ]
\right \}.
\label{Plefka}
\end{eqnarray}

This treatment leads to asymptotically exact results for some systems
as statistical properties of the interaction
matrix allow us to truncate the expansion up to the second order \cite{Plefka}
or enable us to sum up 
all relevant terms in Taylor series analytically \cite{Nakanishi}.
In fact, when $A_{ij}$'s are independently generated from
the zero mean variance $M^{-1}$ Gaussian distribution (i.i.d. Gaussian ensemble),
the expansion yields an expression
\begin{eqnarray}
&&\Phi(\bm ) \simeq \mathop{\rm extr}_{Q,E,\bh} \left \{
\frac{1}{2\sigma^2}||\by-A \bm||^2+\frac{M}{2}\ln \left (1+\frac{Q-q}{\alpha \sigma^2} \right ) \right .\cr
&& \left .
 -\frac{NEQ}{2}\!+\!\bh\cdot \bm
\!- \!\sum_{i=1}^N \ln \! \left [\!\int d x_i P(x_i)\e^{-\frac{E}{2}x_i^2+h_i x_i} \!\right ] \!\right \}\cr
&&+const,
\label{GaussianIID}
\end{eqnarray}
for large $N$ and $M$ owing to the latter property, where $q=N^{-1} ||\bm||^2$.
{Notation of ``$\simeq$'' means that the equation 
holds approximately.}
Under appropriate conditions, its minimum is guaranteed to converge to the fixed point of AMP for large systems \cite{Krzakala},
and the treatment becomes asymptotically exact.

Unfortunately, summing up 
all the relevant terms in Taylor series 
for generic matrices is
technically difficult.
For avoiding this difficulty, we employ an alternative approach based on an identity
$\tilde{\Phi}(\bm; 1 )-\tilde{\Phi}(\bm; 0 )=\int_{0}^1 d \beta \frac{\partial }{\partial \beta} \tilde{\Phi}(\bm; \beta )
=\frac{1}{2 \sigma^2} \int_{0}^1 d \beta \left \langle ||\by-A \bx||^2 \right \rangle_{\beta}$,
following \cite{OpperWinther2001,OpperWinther2006}.
Here, $\left \langle \cdots \right \rangle_\beta$ denotes the average with respect to
the modified distribution $P_\beta(\bx|\by,A) \propto \e^{-\frac{\beta }{2\sigma^2}| |\by-A \bx | |^2} \prod_{i=1}^N \left (P(x_i)\e^{h_ix_i} \right ) $ in which $\bh$ is determined so that $\left \langle \bx \right \rangle_\beta=\bm$ holds for each $\beta$.
As a decomposition $\left \langle ||\by-A \bx||^2 \right \rangle_{\beta}=||\by-A \bm||^2  +
{\rm Tr} \left (A^{\rm T} A C_\beta \right )$ is allowed, where $C_\beta=(\left \langle x_i x_j \right \rangle_\beta - m_i m_j)$, evaluating the second moment $\left \langle x_i x_j \right \rangle_\beta$ is necessary to perform the integral of the last expression.
Here, we approximately perform this by replacing
$P_{\beta }(\bx|\by,A)$ with a Gaussian distribution
$P^{\rm G}_{\beta}(\bx|\by,A) \propto \e^{-\frac{\beta }{2\sigma^2}\left |\left |\by-A \bx \right |\right|^2-\frac{\Lambda}{2} ||\bx||^2+\bh^G \cdot \bx}$,
where $\bh^{\rm G}$ and $\Lambda$
are determined so that
the first moment $\bm = \left \langle \bx \right \rangle_{\beta}$ and
the {macroscopic second moment} $Q=N^{-1} \left \langle ||\bx ||^2 \right \rangle_\beta $
are consistent between $P_{\beta}(\bx|\by,A)$ and $P^{\rm G}_{\beta}(\bx|\by,A)$.
Such an approximation scheme is often termed the {\em expectation consistent (EC)} approximation.
This yields the following theorem.
\begin{theorem}
EC approximation offers
\begin{eqnarray}
&&\Phi(\bm ) \simeq \mathop{\rm extr}_{Q,E,\bh} \left \{
\frac{1}{2\sigma^2}||\by-A \bm||^2-NG\left (-\frac{Q-q}{\sigma^2} \right ) \right .\cr
&& \left .
 -\frac{NEQ}{2}\!+\!\bh\cdot \bm
\!- \!\sum_{i=1}^N \ln \! \left [\!\int d x_i P(x_i)\e^{-\frac{E}{2}x_i^2+h_i x_i} \!\right ] \!\right \}\cr
&&+const,
\label{EC}
\end{eqnarray}
for large systems.
\end{theorem}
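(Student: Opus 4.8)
The plan is to execute the thermodynamic-integration programme already laid out before the statement. Starting from the exact identity $\tilde{\Phi}(\bm;1)-\tilde{\Phi}(\bm;0)=\frac{1}{2\sigma^2}\int_{0}^{1} d\beta\,\langle||\by-A\bx||^2\rangle_\beta$ together with the decomposition $\langle||\by-A\bx||^2\rangle_\beta=||\by-A\bm||^2+{\rm Tr}(A^{\rm T}A\,C_\beta)$, I would evaluate the right-hand side under the EC replacement $P_\beta\to P^{\rm G}_\beta$. Since the first summand is $\beta$-independent once the first moment is pinned to $\bm$, it integrates immediately to $\frac{1}{2\sigma^2}||\by-A\bm||^2$. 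The entire content of the theorem therefore sits in the $\beta$-integral of the trace term and in accounting for the prior contribution hidden in $\tilde{\Phi}(\bm;0)$.

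First I would compute the covariance of the Gaussian surrogate. Because $P^{\rm G}_\beta(\bx|\by,A)\propto\e^{-\frac{\beta}{2\sigma^2}||\by-A\bx||^2-\frac{\Lambda}{2}||\bx||^2+\bh^{\rm G}\cdot\bx}$ is Gaussian, its covariance is $C^{\rm G}_\beta=(\frac{\beta}{\sigma^2}A^{\rm T}A+\Lambda\,\mathbf{I})^{-1}$, and the macroscopic-second-moment matching reads $Q-q=N^{-1}{\rm Tr}\,C^{\rm G}_\beta$, which fixes the mass $\Lambda=\Lambda_\beta$ as a function of $\beta$. Diagonalising with $A^{\rm T}A=ODO^{\rm T}$ and invoking the hypothesis that $\rho_{A^{\rm T}A}$ self-averages to $\rho(\lambda)$ turns every spectral trace into an integral against $\rho$; in particular ${\rm Tr}(A^{\rm T}A\,C^{\rm G}_\beta)=N\int d\lambda\,\rho(\lambda)\,\frac{\lambda}{\frac{\beta}{\sigma^2}\lambda+\Lambda_\beta}$ and $Q-q=\int d\lambda\,\rho(\lambda)(\frac{\beta}{\sigma^2}\lambda+\Lambda_\beta)^{-1}$. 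This is precisely the step where the spectral assumption on $A$ is consumed, and it is what lets the argument cover row-orthogonal as well as i.i.d. Gaussian ensembles.

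Next I would perform the $\beta$-integration. Writing $L(\beta,\Lambda)=\frac{1}{2}\int d\lambda\,\rho(\lambda)\ln(\frac{\beta}{\sigma^2}\lambda+\Lambda)$ one has $\partial_\beta L=\frac{1}{2N\sigma^2}{\rm Tr}(A^{\rm T}A\,C^{\rm G}_\beta)$ and $\partial_\Lambda L=\frac{1}{2}(Q-q)$ (the matching constraint, with $Q$ running along the path), so integrating the trace term along $\Lambda=\Lambda_\beta$ and integrating the $\Lambda$-piece by parts reorganises the result into a boundary log-determinant at $\beta=1$ plus a Legendre contribution $\propto\Lambda(Q-q)$. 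The claim is that this combination is exactly $-NG\!\left(-\frac{Q-q}{\sigma^2}\right)$: the $\beta=1$ matching condition $\frac{Q-q}{\sigma^2}=\int d\lambda\,\rho(\lambda)(\lambda+\sigma^2\Lambda)^{-1}$ is the stationarity condition of the inner $\extr_\Lambda$ defining $G$, after identifying its internal variable with $-\sigma^2\Lambda$, and the accompanying $\frac{1}{2}\Lambda(Q-q)$, $-\frac{1}{2}\ln|x|$ and $-\frac{1}{2}$ pieces match term by term. Equivalently this is the $R$-transform relation $G'=R_{A^{\rm T}A}$ of Theorem~1 applied at $x=-(Q-q)/\sigma^2$, so the emergence of $G$ is an identity about the spectrum of $A^{\rm T}A$ rather than a new computation.

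Finally I would collect the prior contribution. The base free energy $\tilde{\Phi}(\bm;0)$ supplies the factorised prior with its first-moment constraint $\bh\cdot\bm$; the EC procedure augments each site with the Gaussian self-energy $-\frac{E}{2}x_i^2$ inherited from the surrogate, producing the single-site terms $-\sum_{i=1}^N\ln[\int dx_i\,P(x_i)\e^{-\frac{E}{2}x_i^2+h_i x_i}]$ and the Legendre coupling $-\frac{NEQ}{2}$ conjugate to $Q$. Assembling all pieces and extremising over $Q,E,\bh$ yields the stated functional and, by stationarity, reproduces the EC matching equations ($\bm$ and $Q$ equal the first and macroscopic second moments of the tilted single-site law, $E=\frac{2}{\sigma^2}G'(-(Q-q)/\sigma^2)$), which serves as the internal consistency check. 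The hard part will be the third step: showing that the $\beta$-path integral of the resolvent trace, with the mass $\Lambda_\beta$ running, telescopes exactly into the endpoint spectral function $G$ rather than into some path-dependent object — that is, that the accumulated spectral shift is the $R$-transform — and, at the level of rigour, justifying both the replacement of the true moments by their Gaussian surrogates and the self-averaging of $\rho_{A^{\rm T}A}$ in the large-system limit.
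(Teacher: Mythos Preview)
Your thermodynamic-integration programme follows the motivational discussion preceding the theorem, but the paper's actual proof takes a shorter route that sidesteps precisely the difficulty you flag as the ``hard part''. Instead of pinning only $\bm$ and letting $Q=N^{-1}\langle ||\bx||^2\rangle_\beta$ run along the $\beta$-path, the paper augments the generalised Gibbs free energy to $\tilde{\Phi}(\bm,Q;\beta)$, constraining \emph{both} $\bm$ and $Q$ via Lagrange multipliers $\bh$ and $E$. The EC approximation is then stated as the three-term formula $\tilde{\Phi}(\bm,Q;1)\simeq\tilde{\Phi}(\bm,Q;0)+\tilde{\Phi}^{\rm G}(\bm,Q;1)-\tilde{\Phi}^{\rm G}(\bm,Q;0)$, which needs only endpoint evaluations: the $\beta=0$ true term supplies the factorised prior together with $-\frac{NEQ}{2}$ and the quadratic tilt $-\frac{E}{2}x_i^2$ (these are the Lagrange data for the $Q$ constraint, not something ``inherited from the surrogate''); the $\beta=1$ Gaussian term gives $\frac{1}{2\sigma^2}||\by-A\bm||^2$ plus $\mathop{\rm extr}_\Lambda\{\frac{1}{2}\ln|\det(\Lambda-A^{\rm T}A)|+\frac{N\Lambda(Q-q)}{2\sigma^2}\}$; and subtracting the $\beta=0$ Gaussian term $-\frac{N}{2}\ln\frac{Q-q}{\sigma^2}-\frac{N}{2}$ assembles $-NG(-(Q-q)/\sigma^2)$ directly from the definition of $G$, once $\ln|\det(\Lambda-A^{\rm T}A)|$ is replaced by $N\int d\lambda\,\rho(\lambda)\ln|\Lambda-\lambda|$. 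A final extremisation over $Q$ returns $\Phi(\bm)$.

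The gap in your route is the one you yourself suspect. With $Q(\beta)$ running, the integration by parts you sketch leaves a residual $\frac{1}{2}\int_0^1\Lambda_\beta\,\frac{dQ}{d\beta}\,d\beta$, which is genuinely path-dependent and does not collapse to an endpoint object; there is no $R$-transform identity that absorbs it. Your instinct that the answer should be $-NG$ is correct, but the device that makes it so is exactly the one you leave vague in your final paragraph: promote $Q$ to a fixed external parameter so that $dQ/d\beta=0$ along the path, and extremise over it only at the end. With that single modification the $\beta$-integral telescopes trivially---indeed it reduces to the three-term difference above---and the origin of $E$ and $-\frac{NEQ}{2}$ becomes transparent rather than something to be grafted on afterwards.
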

\begin{proof}
For considering the consistency of $\bm$ and $Q$, we define the generalized Gibbs free energy
as $\tilde{\Phi}(\bm,Q;\beta)=\mathop{\rm extr}_{\bh,E} \! \! \left \{\!
-\! \frac{NEQ}{2} \!+\! \bh \! \cdot \! \bm \!- \!\ln \! \left [ \! \int \! d \bx P(\bx)
\e^{\! -\frac{\beta}{2 \sigma^2}||\by\!-\!A\bx||^2\! -\! \frac{E}{2}||\bx||^2\!+\!\bh\cdot \bx} \! \right ] \! \right \}$,
and denote its Gaussian approximation as $\tilde{\Phi}^{\rm G}(\bm,Q;\beta)$.
EC approximation offers an expression
$\tilde{\Phi}(\bm,Q;1)\simeq $ $ \tilde{\Phi}(\bm,Q;0)+\tilde{\Phi}^{\rm G}(\bm,Q;1)-\tilde{\Phi}^{\rm G}(\bm,Q;0)$.
Each part on the right-hand side is evaluated as follows:
$\tilde{\Phi}(\bm,Q;0) = \mathop{\rm extr}_{\bh,E} \! \!\left \{-\frac{NEQ}{2}+\bh \cdot \bm
\!- \!\sum_{i=1}^N \!\ln \!\left[\!\int dx_i P(x_i) \e^{-\frac{E}{2}x_i^2+h_i x_i} \!\right ]\! \right \}$.
$\tilde{\Phi}^{\rm G}(\bm,Q;1)\!=\!\frac{1}{2\sigma^2}||\by \!- \!A \bm||^2
+ \mathop{\rm extr}_{\Lambda}\left \{\frac{1}{2}\ln \left |\det \left (\Lambda - A^{\rm T}A  \right ) \right | \right .$
$ \left . + \frac{N \Lambda(Q-q)}{2\sigma^2} \right \}+ const$.
$\tilde{\Phi}^{\rm G}(\bm,Q;0)=-\frac{N}{2} \ln \left (\frac{Q-q}{\sigma^2} \right )-\frac{N}{2} + const$.
For $N,M \gg 1$, one can replace
$\ln \left |\det \left (\Lambda - A^{\rm T}A  \right ) \right |$ with $N \int d \lambda \rho(\lambda) \ln |\Lambda -\lambda|$.
Substituting the three expressions in conjunction with this replacement into
the identity $\Phi(\bm)= \mathop{\rm extr}_{Q}\left \{\tilde{\Phi}(\bm,Q;1) \right \}
\simeq \mathop{\rm extr}_{Q}\left \{\tilde{\Phi}(\bm,Q;0)+\tilde{\Phi}^{\rm G}(\bm,Q;1)
 \right . $
$\left . -\tilde{\Phi}^{\rm G}(\bm,Q;0)
\right \}$ yields (\ref{EC}).
\end{proof}

Here, two points are worth noting. First, for the current characterization of $A$ based on
the eigenvalue decomposition $A^{\rm T}A = ODO^{\rm T}$, all statistical features of
$A$ are summarized in $G(x)$, which is defined for the asymptotic eigenvalue distribution
$\rho(\lambda)$, in (\ref{EC}). This means that the functional form to be optimized for
computing the Bayesian recovery varies depending on the employed matrix ensemble.
For instance, $G(x)=-\frac{\alpha }{2}\ln \left(1-\alpha^{-1}x \right )$
should be used for the i.i.d. Gaussian ensemble, which reduces (\ref{EC}) to (\ref{GaussianIID}). However, when $A$ is constructed by
randomly selecting $M$ rows from a randomly generated $N\times N$ orthogonal matrix
(row-orthogonal ensemble), the proper function to be employed is given by $G(x)=\mathop{\rm extr}_{\Lambda}\left \{
-\frac{1-\alpha}{2}\ln \Lambda-\frac{\alpha}{2}\ln |\Lambda-\alpha^{-1} |+\frac{1}{2}\Lambda x\right \}
-\frac{1}{2}\ln |x|-\frac{1}{2}$.
This implies that
the employment of AMP (in general, GAMP), the fixed point of which asymptotically extremizes
(\ref{GaussianIID}), for generic matrix ensembles
may not be a theoretically appropriate treatment
even if it leads to a satisfiable approximation accuracy \cite{Barbier}.
Second, although we imposed the consistency of the second moment in a macroscopic manner, one can construct a more
accurate approximation by achieving the consistency in a component wise manner as
for $Q_i=\left \langle x_i^2 \right \rangle_\beta$. Such an approximation was once tested for CDMA demodulation \cite{Winther};
however, it incurs $O(N^3)$ computational costs and is difficult to use for large systems.

\subsection{Consistency with the replica theory}
Following the argument of  \cite{OpperWinther2001},
one can show that EC approximation
becomes asymptotically consistent with the replica theory
for matrix ensembles of the current characterization.
For this, we denote the function to be extremized in (\ref{EC}) as
$\Phi(\bm,Q,\bh,E)$, and introduce the auxiliary partition function
$Y(Q,E;\beta)=\int d\bh d \bm \e^{-\beta \Phi(\bm,Q,\bh,E)}$.
In the limit $\beta \to \infty$, $Y(Q,E;\beta)$ is dominated by the values of
$\bm$ and $\bh$ for which $\Phi(\bm,Q,\bh,E)$ is stationary, provided the paths of
integration are chosen such that the integral exists. Further, assuming the stationarity
with respect to $Q$ and $E$, we have an expression of free energy density as
$f=N^{-1} \mathop{\rm min}_{\bm}
\left \{\Phi(\bm) \right \}
=N^{-1}\mathop{\rm extr}_{Q,E}\left \{-\lim_{\beta \to \infty} \beta^{-1} {\ln Y(Q,E;\beta)} \right \}$.
Variation with respect to $Q$ offers $E=\frac{2}{\sigma^2} G^\prime\left (-(Q-q)/\sigma^2 \right )$.

For assessing the average of $f$ with respect to $A$, $\bx^0$, and $\bn$,
we employ the replica method using the average under the replica symmetric ansatz, 
which offers
\begin{eqnarray}
&&\frac{1}{N} \ln \left [ \exp \left [-\frac{\beta }{2\sigma^2}\sum_{a=1}^n\left | \left |A(\bx^0 - \bm^a)+\bn \right |\right |^2 \right ] \right ]_{A,\bn}\cr
&& = \! n \!\left (\! -\! \left (\frac{\beta (\overline{q}\!-\!2 m \!+\!Q_0)}{\sigma^2}
\! -\! \frac{\sigma_0^2\!\beta^2\! (q\!-\!\overline{q})}{\sigma^4} \!\right )
G^{\prime}\! \left ( \!-\frac{\beta (q\! -\! \overline{q}  )}{\sigma^2} \!\right )\!
\right . \cr
&& \left . \hspace*{1cm} +  G\left (-\frac{\beta (q-\overline{q} )}{\sigma^2}\right )\right ) +O(n^2),
\end{eqnarray}
where we set $q=N^{-1} ||\bm^a||^2$, $\overline{q}=N^{-1} \bm^a \cdot \bm^b$ $(a \ne b)$,
and $m=N^{-1} \bx^0 \cdot \bm^a$.
It is worth noting that $\overline{q} \to q$ holds for $\beta \to \infty$ and
we can identify $\lim_{\beta \to \infty }\beta (q-\overline{q}) =Q-q \equiv \chi$ by a linear response argument.
For $\beta \to \infty$, the integrations over $m_i^a$ and $h_i^a$ can be performed by
using the saddle-point method.
This yields $m_i^a=0$ and $h_i^a=\sqrt{\hat{q}} z_i+\hat{m}x_i^0$ as the saddle point,
where
\begin{eqnarray}
\hat{q} \!=\! \frac{2}{\sigma^2}\!
\left (\!\frac{q\!-\!2m\!+\!Q_0}{\sigma^2}\!-\!\frac{\sigma_0^2 \chi}{\sigma^4} \!\right )
\! G^{\prime \prime} \! \left ( \! -\frac{\chi}{\sigma^2} \!\right )
\!+\!\frac{2\sigma_0^2}{\sigma^4} \!G^{\prime }
\! \left ( \! -\frac{\chi}{\sigma^2} \! \right ),
\end{eqnarray}
\begin{eqnarray}
\hat{m}\!=\!\frac{2}{\sigma^2} G^\prime\left (-\frac{\chi}{\sigma^2} \right ) =E,
\end{eqnarray}
and $z_i$ is a standard Gaussian random variable.
Combining all these, we find the consistency between EC approximation and
the replica theory as
\begin{eqnarray}
\left [ f \right ]_{A,\bx^0,\bn}
=\mathop{\rm extr}_{q,m,Q,\hat{q},\hat{m},\hat{Q}}
\left \{\phi(q,m,Q,\hat{q},\hat{m},\hat{Q} ) \right \}
\end{eqnarray}
by identifying $E=\hat{Q}+\hat{q}$ in (\ref{replica_free_entropy}).

\section{Experimental validation}
We performed numerical experiments for the signal recovery of compressed sensing using the Bernoulli-Gaussian prior
\begin{eqnarray}
P(x)=(1-\rho) \delta(x)+\rho\frac{\exp\left(-\frac{x^2}{2 \sigma_X^2} \right )}{\sqrt{2 \pi \sigma_X^2}},
\label{GBprior}
\end{eqnarray}
for examining the accuracy of the developed scheme.
In the experiments, we set $\rho=0.1$, $\sigma_X^2=1$, and $\sigma^2=0.01$,
and the correct prior and noise value were used for simulating the Bayesian optimal recovery.
The performance was examined for i) row-orthogonal and i.i.d. Gaussian ensembles. 
In addition to these, iii) random $M$ row selection from discrete cosine transform matrix (random DCT),
which does not follow a rotationally invariant distribution but shares the same eigenvalue distribution with
the row-orthogonal ensemble, was tested for investigating the significance of rotational invariance.

\begin{figure}
\begin{center}
\includegraphics[width=6cm,angle=270]{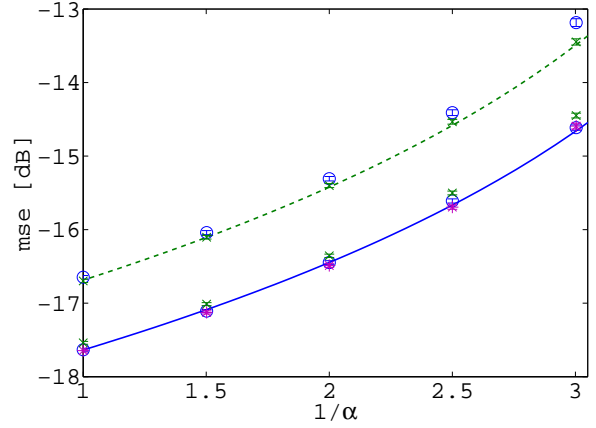}
\end{center}
\caption{
The normalized mean square error $\mathbb{E}\left \{||\bm-\bx^0||^2/||\bx^0||^2 \right \}$  
versus 
$1/\alpha=N/M$.  
Full (blue) and  broken (green) curves represent the theoretical prediction assessed by the replica method
for the row-othogonal and i.i.d. Gaussian ensembles, respectively.
Circle (blue) indicates experimental results obtained by the EC recovery designed for the row-orthogonal ensemble, 
whereas cross (green) stands for those by AMP, which is suitable for the i.i.d. Gaussian ensemble.  
Asterisk (magenta) shows the result for random DCT obtained by the EC recovery designed for
the row-orthogonal ensemble. 
}
\label{fig1}
\end{figure}

The equation to be solved for the recovery can be read as
\begin{eqnarray}
&&\bh \!=\!\frac{1}{\sigma^2}A^{\rm T}\left (\by-A \bm \right ) + E \bm, \label{recovery_equation0}\\
&&m_i \!=\! \frac{\rho Z(h_i, E) }{1\!-\!\rho\!+\!\rho Z(h_i,E) }\frac{h_i}{E+\sigma_X^{-2}},
\label{recovery_equation}
\end{eqnarray}
\begin{eqnarray}
Q_i \!=\! \frac{\rho Z(h_i, E) }{1\!-\!\rho\!+\! \rho Z(h_i,E) } \!\left (\! \frac{1}{E\!+\!\sigma_X^{-2}}\!+\! \!
\left (\! \frac{h_i}{E\!+\!\sigma_X^{-2} }
\! \right )^{\! 2} \right ) \! ,
\label{recovery_equation2}
\end{eqnarray}
where $Z(h_i,E)=(1+\sigma_X^2E)^{-1/2} \exp \left (\frac{h_i^2}{2(E+\sigma_X^{-2}) } \right )$,
$\chi=N^{-1}\sum_{i=1}^N(Q_i-m_i^2)$, and $E=\frac{2}{\sigma^2} G^\prime\left (-\chi/\sigma^2 \right )$.
The naive iterative substitution scheme did not exhibit a good convergence property.
Therefore, we introduced a dumping factor $\gamma$ and
updated $m_i$ and $\chi$ as $(1-\gamma) m_i +\gamma m_i^{\rm new} \to m_i$
and $(1-\gamma)\chi+\gamma \chi^{\rm new} \to \chi$, where
$m_i^{\rm new}$ and $\chi^{\rm new}$ are the values evaluated from the right-hand sides of
(\ref{recovery_equation}) and (\ref{recovery_equation2}).
For all experiments, we truncated the updates up to $3 \times 10^3$ iterations
setting $\gamma=0.05$, which led to no divergent behavior but
exhibited slower convergence as $\alpha $ decreases.

We constructed the EC approximation assuming the row-orthogonal ensemble. 
For comparison, we also tested the performance of AMP
{designed for (\ref{GBprior})}, which is suitable for the i.i.d. Gaussian ensemble.
Symbols in Fig. \ref{fig1} show the signal recovery performance evaluated from $10^3$ experiments
of $N=2^{10}$ systems
while curves stand for the theoretical prediction assessed by the replica method. 
These indicate the superiority of the row-orthonal to the i.i.d. Gaussian ensembles in the noisy setting, 
which was also reported for $l_p$-recovery in \cite{VehkaperaKabashima}. 
Excellent agreement between the circles/crosses and the full/broken curves experimentally validates the consistency between  
the ensemble-dependent proper approximations 
and the replica theory. 
Slight deviation of symbols for the inappropriate recovery schemes 
indicates the necessity for knowing statistical properties of the observation matrix for 
constructing a theoretically proper approximation, whereas its significance becomes smaller as the
compression rate $\alpha$ grows. The result for random DCT indicates that the performance of 
the row-orthogonal ensembles can be practically gained with a low computational cost, 
approximately $O(N)$, 
by random row choice of a Fourier matrix 
similarly to the noise free case reported in \cite{DonohoTannerUniversal}.

\section{Summary}
We developed a computationally feasible approximate scheme of
signal recovery for linear observations affected by Gaussian noises.
The scheme
follows the Gibbs free energy formalism of statistical mechanics and
approximately overcomes the computational difficulty
for evaluating the Gibbs free energy by using a Gaussian approximation
for which the consistency with the true distribution is imposed
for the first moment and a part of the second moment.
The asymptotic consistency with the replica theory is guaranteed for
a class of the measurement matrix ensembles that are characterized by rotational invariance. Experiments for the Bayesian optimal recovery
for compressed sensing using the Bernoulli-Gaussian prior
numerically validated the theoretically obtained results.

The combination of the developed recovery scheme and hyper-parameter estimation \cite{Schneiter,Krzakala}
is under way. Designing a good iteration scheme to solve the recovery equation
(\ref{recovery_equation0})--(\ref{recovery_equation2}) is an interesting and important task.

\section*{Acknowledgments}
The research was funded in part by MEXT KAKENHI Grant No. 25120013 (YK) and
Swedish Research Council under VR Grant 621-2011-1024 (MV).
MV acknowledges the MEXT KAKENHI Grant No. 24106008 for supporting
his visit to the Tokyo Institute of Technology.



%

\end{document}